\definecolor{shadecolor}{rgb}{0.95, 0.95, 0.86}
\renewcommand{\d}{{\mathrm d}}
\newcommand{\e}{\mathrm{e}}
\numberwithin{equation}{section}
\newtheorem{theo}{Theorem}[section]
\newtheorem{rem}[theo]{Remark}
\newtheorem{prop}[theo]{Proposition} 
\newtheorem{cor}[theo]{Corollary}
\begin{document}

\title[Short distance asymptotics]{A short note on the scaling function constant problem in the two-dimensional Ising model}

\author{Thomas Bothner}
\address{Department of Mathematics, University of Michigan, 2074 East Hall, 530 Church Street, Ann Arbor, MI 48109-1043, United States}
\email{bothner@umich.edu}

\keywords{Two-dimensional Ising model, 2-point function, short distance expansion, action integral.}

\subjclass[2010]{Primary 82B20; Secondary 70S05, 34M55}

\thanks{The author is grateful to C. Tracy and A. Its for stimulating discussions about this project. This work is supported by the AMS and the Simons Foundation through a travel grant.}

\begin{abstract} We provide a simple derivation of the constant factor in the short-distance asymptotics of the tau-function associated with the $2$-point function of the two-dimensional Ising model. This factor was first computed by C. Tracy in \cite{T} via an exponential series expansion of the correlation function. Further simplifications in the analysis are due to Tracy and Widom \cite{TW} using Fredholm determinant representations of the correlation function and Wiener-Hopf approximation results for the underlying resolvent operator. Our method relies on an action integral representation of the tau-function and asymptotic results for the underlying Painlev\'e-III transcendent from \cite{MTW}. 
\end{abstract}

\date{\today}
\maketitle
\section{Introduction and statement of results}\label{sec:11}
This paper is concerned with the short distance expansion of the spin-spin correlation functions $\langle \sigma_{00}\sigma_{MN}\rangle$ for the 2D Ising model in the scaling limit analyzed by Barouch, McCoy, Tracy and Wu \cite{BMTW}.
\subsection{Definition of the model} The states of the 2D Ising model on a rectangular lattice are random spins $\sigma_{ij}=\pm 1$ at sites $(i,j)\in\mathbb{Z}^2$. As usual in Gibbs's statistical mechanics, one first analyzes the infinite system in a finite box $\Lambda=\{(i,j)\in\mathbb{Z}^2:\,0\leq i\leq M,0\leq j\leq N\}$ and then passes to the thermodynamic limit $\Lambda\uparrow\mathbb{Z}^2$. For finite $\Lambda$ the nearest-neighbor interaction energy of a configuration $\sigma=(\sigma_{ij})$ equals
\begin{equation*}
	\mathcal{E}_{\Lambda}(\sigma)=-J_1\sum_{(i,j)\in\Lambda}\sigma_{ij}\sigma_{ij+1}-J_2\sum_{(i,j)\in\Lambda}\sigma_{ij}\sigma_{i+1j},\ \ \ \ \ J_i>0,
\end{equation*}
and we assume periodic boundary conditions for simplicity, $\sigma_{i+M+1j}=\sigma_{ij},\sigma_{ij+N+1}=\sigma_{ij}$. The objects of principal physical interest are the Ising correlations, i.e. for a finite set $A\subset\Lambda$ the expectations
\begin{equation}\label{Ising}
	\left\langle\prod_{(i,j)\in A}\sigma_{ij}\right\rangle=\sum_{\sigma}\prod_{(i,j)\in A}\sigma_{ij}\mu_{\Lambda}(\sigma);\hspace{1cm}
	\mu_{\Lambda}(\sigma)=\frac{1}{Z_{\Lambda}}\e^{-\beta\mathcal{E}_{\Lambda}(\sigma)},
\end{equation}
where $\mu_{\Lambda}(\sigma)$ is the Gibbs measure of a given configuration $\sigma$ in $\Lambda$ at inverse temperature $\beta=\frac{1}{k_{\textnormal{B}}T}>0$ and $Z_{\Lambda}$ serves as normalization. It is a well known fact that the limiting behavior of the Ising correlations \eqref{Ising} is temperature dependent, for instance (cf. \cite{LM,P,MW}) there exists a critical value $T_c>0$ determined via
\begin{equation*}
	\sinh(2\beta_c J_1)\sinh(2\beta_c J_2)=1,\ \ \ \ \beta_c=\beta(T_c),
\end{equation*}
such that
\begin{equation}\label{trans}
	\lim_{\Lambda\uparrow\mathbb{Z}^2}\langle\sigma_{00}\rangle^+>0\ \ \textnormal{for}\ T<T_c\hspace{1cm}\textnormal{and}\hspace{1cm}\lim_{\Lambda\uparrow\mathbb{Z}^2}\langle\sigma_{00}\rangle^+=0\ \ \textnormal{for}\ T>T_c,
\end{equation}
where $\langle\cdot\rangle^+$ denotes the average \eqref{Ising} with all boundary spins equal to $+1$. This sharp transition between order and disorder marks the existence of a phase transition in the 2D Ising model and our interest here lies on the behavior of the two-point function $\langle\sigma_{00}\sigma_{MN}\rangle$ in the {\it massive scaling limit} $\Lambda\uparrow\mathbb{Z}^2,T\uparrow\downarrow T_c$ as studied in \cite{BMTW}.
\subsection{Scaling theory of $\langle\sigma_{00}\sigma_{MN}\rangle$} Let $z_1=\tanh(\beta J_1)\in(0,1),z_2=\tanh(\beta J_2)\in(0,1)$ and note that
\begin{equation*}
	z_1z_2+z_1+z_2=1+2\big(J_1(1-z_{1c})+J_2(1-z_{2c})\big)(\beta-\beta_c)+\mathcal{O}\left((\beta-\beta_c)^2\right),\ \ \beta\rightarrow\beta_c,
\end{equation*}
with $z_{jc}=z_j(T_c)$. With $R>0$ denoting the spatial distance
\begin{equation}\label{space}
	R=\left(\sqrt{\frac{z_1(1-z_2^2)}{z_2(1-z_1^2)}}M^2+\sqrt{\frac{z_2(1-z_1^2)}{z_1(1-z_2^2)}}N^2\right)^{\frac{1}{2}},
\end{equation}
we recall the following fundamental result.
\begin{theo}[Barouch-McCoy-Tracy-Wu \cite{BMTW}]\label{cent} Let $T\rightarrow T_c$ and $R\rightarrow\infty$ such that
\begin{equation*}
	\lim_{\substack{T\rightarrow T_c \\ R\rightarrow\infty}}\frac{\big|z_1z_2+z_1+z_2-1\big|}{\sqrt[4]{z_1z_2(1-z_1^2)(1-z_2^2)}}\,R=t
\end{equation*}
exists with $t\in(0,+\infty)$. Then
\begin{equation}\label{PIII}
	\lim_{\substack{T\downarrow\uparrow T_c\\ R\rightarrow\infty}}R^{\frac{1}{4}}\langle\sigma_{00}\sigma_{MN}\rangle=(2t)^{\frac{1}{4}}\big(\sinh(2\beta_cJ_1)+\sinh(2\beta_cJ_2)\big)^{\frac{1}{8}}\tau_{\pm}\left(t,\frac{1}{\pi}\right),
\end{equation}
with
\begin{equation*}
	\tau_{\pm}(t,\lambda)=\exp\left[\frac{1}{4}\int_t^{\infty}\left(\sinh^2\psi(s,\lambda)-\left(\frac{\d\psi}{\d s}(s,\lambda)\right)^2\right)s\,\d s\right]\begin{cases}\sinh\frac{1}{2}\psi(t,\lambda),&T\downarrow T_c\ (+)\smallskip\\ \cosh\frac{1}{2}\psi(t,\lambda),&T\uparrow T_c\ (-)\end{cases}.
\end{equation*}
The function $\psi=\psi(t,\lambda),\lambda\pi\in[0,1]$ is a distinguished solution to the radial sinh-Gordon equation
\begin{equation}\label{e:1}
	\frac{\d^2\psi}{\d t^2}+\frac{1}{t}\frac{\d\psi}{\d t}=\frac{1}{2}\sinh(2\psi),
\end{equation}
uniquely determined by the boundary condition 
\begin{equation}\label{plus}
	\psi(t,\lambda)\sim2\lambda K_0(t),\ \ \ t\rightarrow+\infty,
\end{equation} 
in terms of the modified Bessel function $K_0(z)$, cf. \cite{NIST}.
\end{theo}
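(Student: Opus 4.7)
The plan is to start from the thermodynamic-limit formula for $\langle\sigma_{00}\sigma_{MN}\rangle$ and to pass to the massive scaling limit in a controlled fashion. First, I would invoke the Montroll--Potts--Ward Pfaffian representation, which expresses $\langle\sigma_{00}\sigma_{MN}\rangle^2$ in the thermodynamic limit on $\mathbb{Z}^2$ as a block Toeplitz/Hankel determinant whose $2\times 2$ matrix symbol is built algebraically out of $z_1,z_2$ and carries two branch cuts on the unit circle. The separation of those cuts is governed by $|z_1z_2+z_1+z_2-1|$, which encodes the distance to criticality; in the scaling limit the cuts collapse, and after a change of contour variable the combination $\frac{|z_1z_2+z_1+z_2-1|}{\sqrt[4]{z_1z_2(1-z_1^2)(1-z_2^2)}}\,R$ emerges naturally as the order parameter, playing the role of $t$ (the specific form of $R$ in \eqref{space} being exactly what makes the two directions $M,N$ contribute symmetrically).

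Second, I would rewrite the squared correlation as a Fredholm determinant $\det(1-\lambda^2 K_t)$, where $K_t$ is a Wiener--Hopf operator on $L^2[t,\infty)$ whose Fourier symbol produces a modified Bessel $K_0(t)$ decay rate at large $t$. A Tracy--Widom style differentiation argument for $\log\det(1-\lambda^2 K_t)$ in $t$, combined with the standard resolvent identities for Wiener--Hopf operators, then shows that the function $\psi(t,\lambda)$ defined through the diagonal of the resolvent of $K_t$ satisfies the radial sinh-Gordon equation \eqref{e:1}; this is essentially the Jimbo--Miwa--M\^ori--Sato holonomic deformation identity in disguise.

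Third, to pin down the specific transcendent, I would extract the large-$t$ asymptotics of $\psi$ directly from the Neumann series of $K_t$: since $K_t$ is of order $K_0(t)$ as $t\to\infty$, the first iterate gives $\psi(t,\lambda)=2\lambda K_0(t)+\mathcal{O}(\e^{-3t}/t^{3/2})$, which is precisely the boundary condition \eqref{plus}. The $\sinh$ versus $\cosh$ prefactor and the external normalization $(2t)^{1/4}(\sinh 2\beta_c J_1+\sinh 2\beta_c J_2)^{1/8}$ are then fixed by matching the low-temperature branch ($T\downarrow T_c$) against Onsager's spontaneous magnetization $\langle\sigma_{00}\rangle_{-}^{2}=(1-(\sinh 2\beta J_1\sinh 2\beta J_2)^{-2})^{1/4}$ and the high-temperature branch against its vanishing; the factor $\lambda=1/\pi$ originates from the value of the Wiener--Hopf symbol at coincidence of the collapsed branch cuts.

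The hard part, as always in such arguments, is the \emph{uniform} asymptotic control: passing from the finite-$R$, finite-$(T-T_c)$ Pfaffian to the limiting Wiener--Hopf Fredholm determinant requires trace-class (or at least Hilbert--Schmidt) convergence of the operators, not merely pointwise convergence of kernels, so that the limit of the determinant is the determinant of the limit. Equally delicate is the rigidity argument showing that the Painlev\'e III transcendent produced by the scaling limit coincides with the one singled out by \eqref{plus}; this uniqueness-from-large-$t$-behavior input is precisely what is supplied by \cite{MTW}, and it is what makes the $\tau$-function in \eqref{PIII} unambiguously defined.
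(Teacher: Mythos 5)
This theorem is not proved in the paper at all: it is quoted verbatim as background from \cite{BMTW} (with the rigorous Fredholm-determinant formulation \eqref{Fred:1}--\eqref{Fred:2} credited to \cite{P} in Section \ref{sec:13}), so there is no internal proof to compare against. Judged on its own terms, your outline follows the standard literature route -- Pfaffian/Toeplitz representation of the lattice correlation, collapse of the branch cuts as $T\to T_c$ producing the scaling variable $t$, a Fredholm/Wiener--Hopf determinant for the scaled correlation, a deformation identity yielding the radial sinh-Gordon equation, and identification of the transcendent through its large-$t$ behavior \eqref{plus} -- and this is indeed how the rigorous proofs in the literature are organized (closer, in fact, to Palmer's treatment and to Tracy--Widom style determinant analysis than to the original \cite{BMTW} derivation, which proceeded through exponential/form-factor series expansions analyzed term by term in the scaling limit).

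However, as a proof it is a program rather than an argument: every step that carries the real mathematical weight is named but not carried out. You yourself flag the two hardest points -- trace-class convergence of the scaled operators so that the determinant of the limit is the limit of the determinants, and the rigidity argument identifying the limiting transcendent with the solution singled out by \eqref{plus} -- but you do not supply them, and the derivation of \eqref{e:1} from the determinant (your ``Tracy--Widom style differentiation argument'') is precisely the nontrivial isomonodromic/holonomic input, not a routine computation. Likewise the claims that the first Neumann iterate gives $2\lambda K_0(t)$ with the stated error, that $\lambda=1/\pi$ falls out of the symbol at coincidence of the cuts, and that the prefactor in \eqref{PIII} is fixed by matching to the spontaneous magnetization are all plausible and consistent with \cite{BMTW,P,MTW}, but each is asserted rather than established. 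So the proposal is a sound roadmap to the known proof, with the substantive analysis deferred to exactly the references the paper cites.
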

The scaling limits $\tau_{\pm}(t,\lambda)$ in \eqref{PIII} are equivalently expressed in terms of a Painlev\'e-III transcendent with parameters $(\alpha,\beta,\gamma,\delta)=(0,0,1,-1)$ by recalling that $u(x,\lambda)=\e^{-\psi(t,\lambda)},t=2x$ solves
\begin{equation*}
	\frac{\d^2 u}{\d x^2}=\frac{1}{u}\left(\frac{\d u}{\d x}\right)^2-\frac{1}{x}\frac{\d u}{\d x}+u^3-\frac{1}{u}.
\end{equation*}
Famously, \eqref{PIII} (together with the later works of Ablowitz and Segur \cite{AS}) was central to all subsequent developments in modern Painlev\'e special function theory, in particular the first rigorous solution of a Painlev\'e connection problem was obtained by McCoy, Tracy and Wu in their subsequent work \cite{MTW}: For the one-parameter family of solutions $\psi(t,\lambda)$ to \eqref{e:1} with $t\in(0,+\infty)$ and $\lambda\pi\in[0,1]$ subject to the boundary condition \eqref{plus}, we have,
\begin{equation}\label{e:4}
	\psi(t,\lambda)=-\sigma \ln t-\ln B+\mathcal{O}\left(t^{2(1-\sigma)}\right),\ \ \ t\downarrow 0,
\end{equation}
uniformly for $\lambda\pi\in[0,1)$ chosen from compact subsets. The error term in \eqref{e:4} is $t$- and $\lambda$-differentiable and the coefficients $(\sigma,B)$ are the following explicit functions of the parameter $\lambda$,
\begin{equation}\label{e:5}
	\sigma=\sigma(\lambda)=\frac{2}{\pi}\arcsin(\lambda\pi)\in[0,1);\ \ \ \ \ \ \ \ \ B=B(\lambda)=2^{-3\sigma}\frac{\Gamma(\frac{1}{2}(1-\sigma))}{\Gamma(\frac{1}{2}(1+\sigma))},
\end{equation}
in terms of Euler's gamma function $\Gamma(z)$. In fact, $\psi(\cdot,\lambda)$ is smooth on the positive real axis for all $\lambda\pi\in[0,1]$ and we have in addition to \eqref{e:4},
\begin{equation}\label{add}
	\psi\left(t,\frac{1}{\pi}\right)=-\ln t-\ln\Big(-\frac{1}{2}\left\{\ln\left(\frac{t}{8}\right)+\gamma_E\right\}\Big)+\mathcal{O}\left(t^4\ln ^2t\right),\ \ \ \ \ t\downarrow 0,
\end{equation}
in terms of Euler's constant $\gamma_E$. The boundary behavior of $\psi(t,\lambda)$ allows us to compute the long- and short-distance expansions of $\tau_{\pm}(t,\lambda)$ and we can then compare these results to the scaling hypothesis of the two-point function, cf \cite{BMTW}.
\subsection{Tau-function connection problem} The radial sinh-Gordon equation \eqref{e:1} admits the Hamiltonian formulation
\begin{equation}\label{e:8}
	\frac{\d q}{\d t}=\frac{\partial H}{\partial p},\ \ \ \ \ \frac{\d p}{\d t}=-\frac{\partial H}{\partial q},\ \ \ \ \ \ \ H=H(q,p,t)=\frac{t}{2}\sinh^2 q-\frac{p^2}{2t},
\end{equation}
with the identification $q=q(t,\lambda)\equiv\psi(t,\lambda)$. This allows us to rewrite the above formula for $\tau_{\pm}(t,\lambda)$ as
\begin{equation}\label{e:9}
	\tau_{\pm}(t,\lambda)=\exp\left[\frac{1}{2}\int_t^{\infty}H(q,p,s)\,\d s\right]\begin{cases}\sinh\frac{1}{2}q,&(+)\smallskip\\ \cosh\frac{1}{2}q,&(-)\end{cases}.
\end{equation}
In short, $\tau_{\pm}(t,\lambda)$ (up to the $\sinh$ and $\cosh$ factors) is a tau-function, cf. \cite{JMU}, for \eqref{e:1} and the underlying Barouch-McCoy-Tracy-Wu family of solutions $\psi=\psi(t,\lambda)$. The problem of determining the complete asymptotic description of $\tau_{\pm}(t,\lambda)$ as $t\downarrow 0$ provided the same description is given as $t\rightarrow+\infty$ (or vice versa) is known as tau-function connection problem. Parts of this problem are easy, indeed, using \eqref{plus}, \eqref{e:4}, \eqref{e:5} and \eqref{add} in \eqref{e:9}, we obtain at once
\begin{equation*}
	\tau_{\pm}(t,\lambda)\sim \begin{cases}\lambda\sqrt{\frac{\pi}{2t}}\,\e^{-t},&(+)\bigskip\\ 1+\frac{\pi\lambda^2}{8t^2}\,\e^{-2t},&(-)\end{cases},\ \ \ \ t\rightarrow+\infty,\ \ \ \ \lambda\pi\in\left[0,1\right],
\end{equation*}
and, 
\begin{equation}\label{e:6}
	\tau_{\pm}(t,\lambda)\sim A(\lambda)t^{\frac{\sigma}{4}(\sigma-2)},\ \ \ \ t\downarrow 0,\ \ \ \ \lambda\pi\in\left(0,1\right],\ \ \ \ \ \sigma=\sigma(\lambda)=\frac{2}{\pi}\arcsin(\lambda\pi)\in(0,1],
\end{equation}
where $A(\lambda)$ is $t$-independent. However, obtaining a simple, closed form expression for $A(\lambda)$ is challenging when working with \eqref{e:9} only. We quote from \cite{MW}, page $420$:\bigskip
\begin{quote}
``The solutions \eqref{e:4} and \eqref{add} of the Painlev\'e connection problem are not sufficient to compute the transcendental constant $A(\lambda)$ in \eqref{e:6}. In principle this constant is obtained by the integral of the Painlev\'e function in \eqref{e:1} and should follow from the defining differential equation and boundary condition, but in practice such a derivation has never been found.''\bigskip
\end{quote}
It is the purpose of this note to provide such a derivation. Before presenting our method, we emphasize that Tracy \cite{T} computed $A(\lambda)$ in $1991$  (thus solving the tau-function connection problem) through an infinite series representation of $\tau_{\pm}(t,\lambda)$. His result is as follows.
\begin{theo}[Tracy \cite{T}] Let $s=\frac{1}{2}(1-\sigma)\in[0,\frac{1}{2})$ with $\sigma=\sigma(\lambda)$ as in \eqref{e:6}, then
\begin{equation}\label{e:7}
	A(\lambda)=\e^{3\zeta'(-1)-(3s^2+\frac{1}{6})\ln 2}\big(G(1+s)G(1-s)\big)^{-1},
\end{equation}
in terms of the Riemann zeta function $\zeta(z)$ and Barnes-G function $G(z)$, cf. \cite{NIST}.
\end{theo}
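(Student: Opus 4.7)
The plan is to identify $A(\lambda)$ by applying Hamilton--Jacobi theory to the classical action associated with \eqref{e:8} and to combine the resulting closed-form ODE with the MTW short-distance expansion \eqref{e:4}--\eqref{e:5}. First I would write the Lagrangian
\[
L=p\dot q-H=-\tfrac{s}{2}\bigl(\sinh^{2}\psi+\psi_s^{2}\bigr)
\]
and isolate the divergent part of the action $\widetilde S(t,\lambda):=\int_t^{\infty}L\,\d s$ through \eqref{e:4}--\eqref{e:5}, so that $\widetilde S(t,\lambda)=\tfrac{\sigma^{2}}{2}\ln t+S_f(\lambda)+o(1)$ for a finite constant $S_f(\lambda)$. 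The algebraic identity $\int_t^{\infty}H\,\d s=-\widetilde S-\int_t^{\infty}s\psi_s^{2}\,\d s$, combined with one integration by parts against the consequence $(\sinh^{2}\psi-\psi_s^{2})_s=2\psi_s^{2}/s$ of \eqref{e:1}, rewrites the action integral in \eqref{e:9} in a form to which \eqref{e:4}--\eqref{e:5} applies directly; matching the resulting $t\downarrow 0$ expansion against \eqref{e:6} I expect to read off
\[
\ln A(\lambda)=\tfrac{\sigma^{2}}{4}+\tfrac{1}{2}S_f(\lambda)-\tfrac{1}{2}\ln B(\lambda)-\ln 2.
\]

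The crucial step will then be to compute $S_f'(\lambda)$ via the Hamilton--Jacobi identity
\[
\frac{\d\widetilde S}{\d\lambda}(t,\lambda)=\bigl[\,p\,\partial_\lambda\psi\,\bigr]_{s=t}^{s=+\infty}.
\]
The $s=+\infty$ boundary vanishes by \eqref{plus}, while the $s=t$ boundary, evaluated from \eqref{e:4}--\eqref{e:5} together with their $\lambda$-derivatives, should give $\sigma\sigma'(\lambda)\ln t+\sigma(\lambda)\,\partial_\lambda\ln B(\lambda)+o(1)$. Subtracting the $\ln t$-divergence of $\widetilde S$ then leaves the simple ODE $S_f'(\lambda)=\sigma(\lambda)\,\partial_\lambda\ln B(\lambda)$ with initial condition $S_f(0)=0$ (coming from $\psi(\cdot,0)\equiv 0$). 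An integration by parts and the change of variable $\sigma'(\lambda')\,\d\lambda'=\d\sigma$ then produce
\[
S_f(\lambda)=\sigma\ln B-\int_0^{\sigma}\ln B(\sigma')\,\d\sigma'.
\]

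To finish, I would substitute the closed form $\ln B=-3\sigma\ln 2+\ln\Gamma\bigl(\tfrac{1-\sigma}{2}\bigr)-\ln\Gamma\bigl(\tfrac{1+\sigma}{2}\bigr)$ from \eqref{e:5} and invoke the Kinkelin identity
\[
\int_0^{z}\ln\Gamma(x)\,\d x=\tfrac{z(1-z)}{2}+\tfrac{z}{2}\ln(2\pi)+z\ln\Gamma(z)-\ln G(z+1)
\]
at $z=\tfrac{1\mp\sigma}{2}$, which turns the $\sigma'$-integral into a combination of $\ln G(1+s)$ and $\ln G(1-s)$ with $s=\tfrac{1-\sigma}{2}$ up to elementary terms. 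Upon re-insertion into the formula for $\ln A(\lambda)$, the $\ln B$ contributions should cancel, and Glaisher's identity $\ln G(\tfrac12)=\tfrac{\ln 2}{24}-\tfrac{\ln\pi}{4}+\tfrac{3}{2}\zeta'(-1)$ will handle the surviving $G(\tfrac12)$-term, so that the expression collapses to \eqref{e:7}.

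The main obstacle will be the justification of the Hamilton--Jacobi identity to order $o(1)$ as $t\downarrow 0$: evaluating the boundary at $s=t$ requires that \eqref{e:4} be differentiable in $\lambda$ through its subleading error $\mathcal O(t^{2(1-\sigma)})$, which is exactly the uniformity recorded after \eqref{e:4}. Once that uniformity is in hand, the remainder is a sequence of classical manipulations with the $\Gamma$- and Barnes $G$-functions.
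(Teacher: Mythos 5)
Your proposal is correct and follows essentially the same route as the paper: the rewriting of the Hamiltonian integral as $-tH$ plus the classical action (your integration by parts via $(\sinh^2\psi-\psi_s^2)_s=2\psi_s^2/s$ reproduces exactly the paper's identity \eqref{e:10}), the Hamilton--Jacobi step $\partial_\lambda S=-p\,\partial_\lambda\psi\big|_{s=t}$ giving $S_f'(\lambda)=\sigma\,\partial_\lambda\ln B$ (this is Corollary \ref{nice} combined with \eqref{e:13}), and the final evaluation through the Kinkelin/Barnes-$G$ integral \eqref{barnes} and the value of $G(\tfrac12)$ in \eqref{spec}. I checked that your intermediate formula $\ln A=\tfrac{\sigma^2}{4}+\tfrac12 S_f-\tfrac12\ln B-\ln 2$ together with $S_f=\sigma\ln B-\int_0^\sigma\ln B\,\d\sigma'$ does collapse to \eqref{e:7}, so what remains is only the routine bookkeeping already indicated.
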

We will follow here a different route that relies on a novel action integral formula for $\tau_{\pm}(t,\lambda)$. This formula allows us to compute $A(\lambda)$ directly from the boundary behavior \eqref{e:4}, see Section \ref{sec:12} below, and our method does not require any additional integrable structures (such as Fredholm determinant formul\ae\, which were used in \cite{TW}, compare our discussion in Section \ref{sec:13}).\bigskip

Formula \eqref{e:7} is useful for the scaling hypothesis of the 2D Ising model: A special case of \eqref{e:7} appeared first in \cite{W},(4.14),(5.7),(7.2) for $\lambda\pi=1$. In detail, Wu showed that at $T=T_c$,
\begin{equation}\label{tw}
	\langle\sigma_{00}\sigma_{0N}\rangle\Big|_{T=T_c}=\e^{3\zeta'(-1)+\frac{1}{12}\ln 2}N^{-\frac{1}{4}}\sqrt[4]{\frac{1+\tanh^2(\beta_c J_1)}{1-\tanh^2(\beta_c J_1)}}\left(1+\mathcal{O}\left(N^{-2}\right)\right),\ \ \ N\rightarrow\infty,
\end{equation}
and it was conjectured that the numerical constant in the leading order of \eqref{tw} equals
\begin{equation*}
	\lim_{\substack{T\downarrow\uparrow T_c\\ R\rightarrow\infty}}R^{\frac{1}{4}}\langle\sigma_{00}\sigma_{0N}\rangle\bigg|_{t=0}=\lim_{t\downarrow 0}\left\{(2t)^{\frac{1}{4}}\big(\sinh(2\beta_c J_1)+\sinh(2\beta_c J_2)\big)^{\frac{1}{8}}\tau_{\pm}\left(t,\frac{1}{\pi}\right)\right\}.
\end{equation*}
In order to see this we first manipulate \eqref{tw} with the help of \eqref{space} (here $M=0$) and the definition of $T_c$,
\begin{equation*}
	R^{\frac{1}{4}}\langle\sigma_{00}\sigma_{0N}\rangle\Big|_{T=T_c}=\e^{3\zeta'(-1)+\frac{1}{12}\ln 2}\big(\sinh(2\beta_c J_1)+\sinh(2\beta_cJ_2)\big)^{\frac{1}{8}}\left(1+\mathcal{O}\left(N^{-2}\right)\right),\ \ N\rightarrow\infty.
\end{equation*}
But from \eqref{e:6},
\begin{equation*}
	\lim_{t\downarrow 0}\left\{(2t)^{\frac{1}{4}}\big(\sinh(2\beta_c J_1)+\sinh(2\beta_c J_2)\big)^{\frac{1}{8}}\tau_{\pm}\left(t,\frac{1}{\pi}\right)\right\}=2^{\frac{1}{4}}\big(\sinh(2\beta_cJ_1)+\sinh(2\beta_cJ_2)\big)^{\frac{1}{8}}A\left(\frac{1}{\pi}\right),
\end{equation*}
and with \eqref{e:7},
\begin{equation*}
	2^{\frac{1}{4}}A\left(\frac{1}{\pi}\right)=\e^{3\zeta'(-1)+\frac{1}{12}\ln 2},
\end{equation*}
so equality of the constants indeed follows. As emphasized in \cite{T} for the symmetrical lattice (i.e. $J_1=J_2$) the above observation closes a small gap in the proof of the scaling hypothesis of the $2$-point function in the works of Barouch, McCoy, Tracy and Wu.
\subsection{Outline of paper} In Section \ref{sec:12} we compute \eqref{e:7} by rewriting the Hamiltonian integral \eqref{e:9} as action integral plus explicit terms, i.e. terms without any integrals. Our formula is particularly useful for asymptotic analysis since we can shift $t$-integration in the action integral to $\lambda$-integration, compare Corollary \ref{nice} below. After that we simply substitute \eqref{e:4} into \eqref{e:11} and express the remaining $\lambda$-integrals as Barnes-G functions. This completes our proof of \eqref{e:7}. After that, in Section \ref{gen}, we briefly discuss an extension of our method to a $\nu$-generalization of $\tau_{\pm}(t,\lambda)$ that appeared in \cite{MTW}. In Section \ref{sec:13} we conclude with a brief discussion of other recent occurrences of action integral formul\ae\,in exactly solvable models.
\section{Proof of \eqref{e:7} via action integral formula}\label{sec:12}
Our proof hinges crucially upon the following identity.
\begin{prop} Let $q=q(t,\lambda)$ and $p=p(t,\lambda)$ solve \eqref{e:8} subject to \eqref{plus} for $t\in(0,\infty)$ and $\lambda\pi\in[0,1]$. Then
\begin{equation}\label{e:10}
	\int_t^{\infty}H(q,p,s)\,\d s=-tH(q,p,t)+S(t,\lambda),
\end{equation}
where $S$ is the classical action 
\begin{equation*}
	S(t,\lambda)=\int_t^{\infty}\left(p\frac{\d q}{\d s}-H(q,p,s)\right)\,\d s.
\end{equation*}
\end{prop}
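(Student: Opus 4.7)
The plan is to recast \eqref{e:10} as the Euler-type identity
\[
\int_t^{\infty}\Bigl(p\,\frac{\d q}{\d s}-2H(q,p,s)\Bigr)\d s=tH(q,p,t),
\]
which is obtained after expanding $S(t,\lambda)$ on the right-hand side of \eqref{e:10}. The virtue of this reformulation is that each side can be handled by a single antiderivative computation along the Hamiltonian flow, so the whole statement will follow from the decay at infinity of $q,p$ encoded in \eqref{plus}.

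First I would differentiate $s\mapsto sH(q(s),p(s),s)$ along the flow generated by \eqref{e:8}. By Hamilton's equations the cross terms $\d q/\d s\cdot\partial_qH+\d p/\d s\cdot\partial_pH$ cancel, so $\d H/\d s=\partial_sH$, and a one-line expansion gives
\[
\frac{\d}{\d s}\bigl[sH(q,p,s)\bigr]=H+s\,\partial_sH=\frac{s}{2}\sinh^2q-\frac{p^2}{2s}+s\Bigl(\frac{1}{2}\sinh^2q+\frac{p^2}{2s^2}\Bigr)=s\sinh^2q.
\]
Integrating from $t$ to $+\infty$, the boundary datum \eqref{plus} together with $p=-s\,\d q/\d s$ (read off from $\d q/\d s=\partial_pH=-p/s$) yields $q\sim 2\lambda K_0(s)$ and $p\sim 2\lambda sK_1(s)$, hence $sH(q,p,s)=\mathcal{O}(s\e^{-2s})\to 0$ as $s\to\infty$. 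This produces the key intermediate identity
\[
-tH(q,p,t)=\int_t^{\infty}s\sinh^2q\,\d s.
\]

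Finally, a direct algebraic check using $\d q/\d s=-p/s$ shows
\[
p\,\frac{\d q}{\d s}-H(q,p,s)=-\frac{p^2}{2s}-\frac{s}{2}\sinh^2q=H(q,p,s)-s\sinh^2q,
\]
so integrating this relation over $[t,\infty)$ and substituting the previous display gives $S(t,\lambda)=\int_t^{\infty}H\,\d s+tH(q,p,t)$, which is \eqref{e:10}. I do not anticipate any serious obstacle: beyond Hamilton's equations, the only nontrivial input is the vanishing of $\lim_{s\to\infty}sH(q,p,s)$, and this is controlled cleanly by \eqref{plus} and the standard exponential decay of the modified Bessel functions $K_0,K_1$.
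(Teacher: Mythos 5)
Your proof is correct, and it rests on exactly the same ingredients as the paper's argument: the computation $\frac{\d}{\d s}\bigl[sH(q,p,s)\bigr]=H+s\,\partial_sH$ along the Hamiltonian flow \eqref{e:8} (equivalently $p=-s\,\d q/\d s$ and the explicit form of $\partial_sH$) together with the exponential decay of $q,p$ at $s=+\infty$ forced by \eqref{plus}. The only difference is packaging: the paper differentiates the asserted identity \eqref{e:10} in $t$ and fixes the integration constant at infinity, whereas you integrate $\frac{\d}{\d s}[sH]=s\sinh^2q$ to get the intermediate identity $-tH(q,p,t)=\int_t^{\infty}s\sinh^2q\,\d s$ and then rearrange algebraically, which is the same argument in integrated form.
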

\begin{proof} We $t$-differentiate the right hand side in \eqref{e:10} and use \eqref{e:8},
\begin{equation*}
	\frac{\d}{\d t}\Big[-tH(q,p,t)+S(t,\lambda)\Big]=-H-t\frac{\partial H}{\partial t}-p\frac{\d q}{\d t}+H=-\frac{t}{2}\sinh^2q+\frac{p^2}{2t}=-H.
\end{equation*}
Thus both sides in \eqref{e:10} can only differ by a $t$-independent additive term, but since
\begin{equation*}
	q(t,\lambda)\sim2\lambda K_0(t)\sim2\lambda\sqrt{\frac{\pi}{2t}}\,\e^{-t},\ \ \ \ t\rightarrow+\infty,
\end{equation*}
both sides in \eqref{e:10} decay exponentially fast at $t=+\infty$, i.e. \eqref{e:10} follows.
\end{proof}
\begin{rem} In \cite{T0}, section $6.1$, Tracy observed that the Hamiltonian integral in \eqref{e:9} ``looks almost like an action''. Formula \eqref{e:10} gives us closure on this matter: the antiderivative of the Hamiltonian in \eqref{e:9} is an action integral modulo explicit terms, i.e. terms without integrals.
\end{rem}
In order to appreciate the right hand side in \eqref{e:10}, we note that
\begin{cor}\label{nice} For any $t\in(0,\infty)$ and $\lambda\pi\in[0,1]$,
\begin{equation*}
	S(t,\lambda)=-\int_0^{\lambda}p\frac{\partial q}{\partial\lambda'}\,\d\lambda'.
\end{equation*}
\end{cor}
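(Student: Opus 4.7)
The plan is to exploit the classical-mechanics fact that the action, viewed as a function of a parameter of the solution family at fixed endpoints, has a $\lambda$-derivative given by boundary terms. Concretely, I would show that the two sides of the claimed identity agree at $\lambda=0$ and share a common $\lambda$-derivative, and then integrate in $\lambda$.

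At $\lambda=0$ the boundary condition \eqref{plus} selects the trivial solution $q(s,0)\equiv 0$, and the Hamilton equation $\frac{\d q}{\d s}=-\frac{p}{s}$ read off from \eqref{e:8} then forces $p(s,0)\equiv 0$. Consequently $S(t,0)=0$, matching the empty integral on the right hand side of the claim at $\lambda=0$.

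For the $\lambda$-derivative, I would differentiate $S$ under the $s$-integral,
\begin{equation*}
\frac{\partial S}{\partial\lambda}(t,\lambda)=\int_t^{\infty}\left(\frac{\partial p}{\partial\lambda}\frac{\d q}{\d s}+p\,\frac{\d}{\d s}\frac{\partial q}{\partial\lambda}-\frac{\partial H}{\partial q}\frac{\partial q}{\partial\lambda}-\frac{\partial H}{\partial p}\frac{\partial p}{\partial\lambda}\right)\d s,
\end{equation*}
and substitute Hamilton's equations \eqref{e:8} for $\partial_q H$ and $\partial_p H$. The integrand then collapses into the perfect $s$-derivative $\frac{\d}{\d s}\bigl(p\,\partial_\lambda q\bigr)$, so
\begin{equation*}
\frac{\partial S}{\partial\lambda}(t,\lambda)=\Bigl[p(s,\lambda)\,\frac{\partial q}{\partial\lambda}(s,\lambda)\Bigr]_{s=t}^{s=+\infty}=-p(t,\lambda)\,\frac{\partial q}{\partial\lambda}(t,\lambda),
\end{equation*}
since $q(s,\lambda)\sim 2\lambda K_0(s)$, and with it $p$ and $\partial_\lambda q$, decays exponentially at $s=+\infty$. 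Integrating this identity in $\lambda'\in[0,\lambda]$ at fixed $t$ and invoking $S(t,0)=0$ then yields the corollary.

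The only delicate point is the interchange of $\partial_\lambda$ with the improper $s$-integral (and with $\d/\d s$), but I do not anticipate any real obstacle: all relevant quantities inherit exponential tail decay in $s$ from \eqref{plus}, so a standard dominated-convergence argument suffices. The statement is in essence the Hamilton--Jacobi identity $\partial_\lambda S=-p\,\partial_\lambda q$; the only input specific to the Painlev\'e-III/Ising setting is the Barouch--McCoy--Tracy--Wu asymptotic \eqref{plus} used to kill the upper boundary term and to identify the trivial solution at $\lambda=0$.
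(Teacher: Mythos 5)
Your argument is correct and is essentially the paper's own proof: $\lambda$-differentiate $S$ under the integral, use Hamilton's equations \eqref{e:8} to collapse the integrand to the total derivative $\frac{\d}{\d s}\bigl(p\,\partial_\lambda q\bigr)$, kill the boundary term at $s=+\infty$ via \eqref{plus}, and integrate in $\lambda$ from the trivial solution at $\lambda=0$. Your extra remarks (spelling out $p(s,0)\equiv 0$ hence $S(t,0)=0$, and justifying the interchange of $\partial_\lambda$ with the improper integral) are harmless refinements of the same route.
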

\begin{proof} We $\lambda$-differentiate $S(t,\lambda)$, use \eqref{e:8} and integrate by parts,
\begin{equation*}
	\frac{\partial S}{\partial\lambda}=\int_t^{\infty}\left(\frac{\partial p}{\partial\lambda}\frac{\d q}{\d s}+p\frac{\partial}{\partial\lambda}\frac{\d q}{\d s}\right)\,\d s=p\frac{\partial q}{\partial\lambda}\bigg|_{s=t}^{\infty}=-p\frac{\partial q}{\partial\lambda}.
\end{equation*}
But $q(t,0)\equiv 0$, which completes the proof after $\lambda$-integration.
\end{proof}
Summarizing, we have derived the identity
\begin{equation}\label{e:11}
	\tau_{\pm}(t,\lambda)=\exp\left[-\frac{t}{2}H(q,p,t)-\frac{1}{2}\int_0^{\lambda}p\frac{\partial q}{\partial\lambda'}\,\d\lambda'\right]\begin{cases}\sinh\frac{1}{2}q,&(+)\smallskip\\ \cosh\frac{1}{2}q,&(-)\end{cases},\ \ \ t\in(0,\infty),\ \ \lambda\pi\in\left[0,1\right].
\end{equation}
Since expansion \eqref{e:6} holds true for all $\lambda\pi\in(0,1]$ it is sufficient to derive \eqref{e:7} for $\lambda\pi\in(0,1)$ and we set out to achieve this by means of \eqref{e:11} and \eqref{e:4}. First, 
\begin{equation}\label{e:12}
	-\frac{t}{2}H(q,p,t)=\frac{\sigma^2}{4}+\mathcal{O}\left(t^{2(1-\sigma)}\right),\ \ \ t\downarrow 0,
\end{equation}
which follows from straightforward $t$-differentiation of \eqref{e:4} and substitution into \eqref{e:8}. Second,
\begin{equation}\label{e:13}
	-\frac{1}{2}\int_0^{\lambda}p\frac{\partial q}{\partial\lambda'}\,\d\lambda'=\frac{\sigma^2}{4}\ln t+\underbrace{\frac{1}{2}\int_0^{\lambda}\sigma(\lambda')\frac{\partial}{\partial\lambda'}\ln B(\lambda')\,\d\lambda'}_{=L(\lambda)}+\mathcal{O}\left(t^{2(1-\sigma)}\ln t\right),
\end{equation}
again from $\lambda$-differentiation of \eqref{e:4}. In order to express the remaining integral $L(\lambda)$ in \eqref{e:13} as Barnes-G function we rely on, cf. \cite{NIST},
\begin{equation}\label{barnes}
	\int_0^z\ln\Gamma(1+x)\,\d x=\frac{z}{2}\ln(2\pi)-\frac{z}{2}(z+1)+z\ln\Gamma(1+z)-\ln G(1+z),\ \ \ z\in\mathbb{C}:\ \Re z>-1,
\end{equation}
the functional equations $\Gamma(1+z)=z\Gamma(z), G(1+z)=\Gamma(z)G(z)$ and the special values
\begin{equation}\label{spec}
	\Gamma\left(\frac{1}{2}\right)=\sqrt{\pi},\ \ \ \ \ \ \ 2\ln G\left(\frac{1}{2}\right)=3\zeta'(-1)-\frac{1}{2}\ln\pi+\frac{1}{12}\ln 2.
\end{equation}
Indeed,
\begin{equation}\label{e:14}
	L(\lambda)=-\frac{3\sigma^2}{4}\ln 2-\frac{1}{2}\int_0^{\lambda}\sigma(\lambda')\frac{\partial}{\partial\lambda'}\ln s(\lambda')\,\d\lambda'+\frac{1}{2}\int_0^{\lambda}\sigma(\lambda')\frac{\partial}{\partial\lambda'}\ln\frac{\Gamma(1+s(\lambda'))}{\Gamma(1-s(\lambda'))}\,\d\lambda'
\end{equation}
where $s=s(\lambda)=\frac{1}{2}(1-\sigma(\lambda))$. The first integral in \eqref{e:14} is elementary,
\begin{equation}\label{e:15}
	-\frac{1}{2}\int_0^{\lambda}\sigma(\lambda')\frac{\partial}{\partial\lambda'}\ln s(\lambda')\,\d\lambda'=-\frac{1}{2}\ln s-\frac{\sigma}{2}-\frac{1}{2}\ln 2,
\end{equation}
and for the second we use \eqref{barnes}, the functional equations and \eqref{spec},
\begin{align}\label{e:16}
	\frac{1}{2}\int_0^{\lambda}\sigma(\lambda')\frac{\partial}{\partial\lambda'}\ln\frac{\Gamma(1+s(\lambda'))}{\Gamma(1-s(\lambda'))}\,\d\lambda'=&\,\frac{1}{2}\ln s+\frac{1}{2}\ln\frac{\Gamma(\frac{1}{2}(1-\sigma))}{\Gamma(\frac{1}{2}(1+\sigma))}-s^2-\ln\big(G(1+s)G(1-s)\big)\nonumber\\
	&+\frac{1}{4}+\frac{7}{12}\ln 2+3\zeta'(-1).
\end{align}
Now combine \eqref{e:12}, \eqref{e:13}, \eqref{e:14}, \eqref{e:15} and \eqref{e:16},
\begin{align*}
	-\frac{t}{2}H(q,p,t)-\frac{1}{2}\int_0^{\lambda}p\frac{\partial q}{\partial\lambda'}\,\d\lambda'=&\,\frac{\sigma^2}{4}\ln t-\frac{3\sigma^2}{4}\ln 2+\frac{1}{12}\ln 2+\frac{1}{2}\ln\frac{\Gamma(\frac{1}{2}(1-\sigma))}{\Gamma(\frac{1}{2}(1+\sigma))}\\
	&-\ln\big(G(1+s)G(1-s)\big)+3\zeta'(-1)+\mathcal{O}\left(t^{2(1-\sigma)}\ln t\right),\ \ \ t\downarrow 0.
\end{align*}
Since also
\begin{equation*}
	\begin{cases}\sinh\frac{1}{2}q,&(+)\smallskip\\ \cosh\frac{1}{2}q,&(-)\end{cases}=\frac{1}{2}t^{-\frac{\sigma}{2}}B^{-\frac{1}{2}}\left(1+\mathcal{O}\left(\max\{t^{2(1-\sigma)},t^{\sigma}\}\right)\right),\ \ t\downarrow 0,
\end{equation*}
we obtain all together, as $t\downarrow 0$ and $\lambda\pi\in(0,1)$ is fixed,
\begin{equation*}
	\tau_{\pm}(t,\lambda)=\e^{3\zeta'(-1)+(\frac{3\sigma}{2}-\frac{3\sigma^2}{4}-\frac{11}{12})\ln 2}\left(G(1+s)G(1-s)\right)^{-1}t^{\frac{\sigma}{4}(\sigma-2)}\left(1+\mathcal{O}\left(\max\{t^{2(1-\sigma)}\ln t,t^{\sigma}\}\right)\right).
\end{equation*}
This expansion matches exactly \eqref{e:6}, \eqref{e:7} with $s=\frac{1}{2}(1-\sigma)$ and thus completes our derivation of $A(\lambda)$.
\section{$\nu$-generalization of \eqref{e:9}}\label{gen}
In \cite{MTW},(1.14b) the following $\nu$-generalization of $\tau_{\pm}(t,\lambda)$ appears,
\begin{align}
	\tau_{\pm}(t,\lambda,\nu)=&\,\exp\left[\frac{1}{4}\int_t^{\infty}\left\{\sinh^2\psi(s,\lambda,\nu)-\left(\frac{\d\psi}{\d s}(s,\lambda,\nu)\right)^2+\frac{4\nu}{s}\sinh^2\left(\frac{1}{2}\psi(s,\lambda,\nu)\right)\right\}\,s\,\d s\right]\label{g:1}\\
	&\,\times\begin{cases}\sinh\frac{1}{2}\psi(t,\lambda,\nu),&(+)\\ \cosh\frac{1}{2}\psi(t,\lambda,\nu),&(-)\end{cases},\ \ \ \ \ \ t>0,\nonumber
\end{align}
where $\psi=\psi(t,\lambda,\nu),t\in(0,+\infty),\lambda\pi\in[0,1],\Re\nu>-\frac{1}{2}$ solves the $\nu$-modified radial sinh-Gordon equation
\begin{equation}\label{nusinh}
	\frac{\d^2\psi}{\d t^2}+\frac{1}{t}\frac{\d\psi}{\d t}=\frac{1}{2}\sinh(2\psi)+\frac{2\nu}{t}\sinh\psi,
\end{equation}
subject to the boundary condition
\begin{equation}\label{psiinf}
	\psi(t,\lambda,\nu)\sim2\lambda\int_1^{\infty}\frac{\e^{-ty}}{\sqrt{y^2-1}}\left(\frac{y-1}{y+1}\right)^{\nu}\,\d y,\ \ \ \ t\rightarrow+\infty.
\end{equation}
For the 2D-Ising model only $\nu=0$ is of importance, hence the underlying tau-function connection problem for \eqref{g:1} has received no attention in the literature, to the best of our knowledge. Our goal is to sketch an adaptation of \eqref{e:10} to \eqref{g:1} which can be used in the solution of this problem. First we record the results of \cite{MTW} on the asymptotic analysis of the one-parameter family of solutions $\psi(t,\lambda,\nu)$ to \eqref{nusinh}. As $t\downarrow 0$,
\begin{equation}\label{g:2}
	\psi(t,\lambda,\nu)=-\sigma\ln t-\ln B-\ln\left[1-\frac{\nu}{B}(1-\sigma)^{-2}\,t^{1-\sigma}+B\nu(1+\sigma)^{-2}\,t^{1+\sigma}+\mathcal{O}\left(t^{2(1-\sigma)}\right)\right],
\end{equation}
which holds uniformly for $\lambda\pi\in[0,1)$ and $\Re\nu>-\frac{1}{2}$ chosen from compact subsets. The error term in \eqref{g:2} is again differentiable with respect to $t,\lambda,\nu$ and the coefficients $(\sigma,B)$ are the following explicit functions of the parameters $\lambda$ and $\nu$,
\begin{equation*}
	\sigma=\sigma(\lambda)=\frac{2}{\pi}\arcsin(\pi\lambda)\in[0,1);\hspace{1cm}B=B(\sigma,\nu)=2^{-3\sigma}\frac{\Gamma^2(\frac{1}{2}(1-\sigma))}{\Gamma^2(\frac{1}{2}(1+\sigma))}\frac{\Gamma(\nu+\frac{1}{2}(1+\sigma))}{\Gamma(\nu+\frac{1}{2}(1-\sigma))}.
\end{equation*}
Moreover, $\psi(\cdot,\lambda,\nu)$ is smooth on the positive real axis for all $\lambda\pi\in[0,1],\Re\nu>-\frac{1}{2}$ and we have in addition to \eqref{g:2},
\begin{equation}\label{g:3}
	\psi\left(t,\frac{1}{\pi},\nu\right)=-\ln\left[\frac{t}{2}\left\{\nu\ln^2t-C(\nu)\ln t+\frac{1}{4\nu}\left(C^2(\nu)-1\right)\right\}\right]+o(1),\ \ \ \ t\downarrow 0
\end{equation}
with $C(\nu)=1+2\nu\big(3\ln 2-2\gamma_E-\psi(1+\nu)\big)$ in terms of the digamma function $\psi(z)$, cf. \cite{NIST}. With \eqref{psiinf}, \eqref{g:2} and \eqref{g:3} back in \eqref{g:1} we find in turn,
\begin{equation*}
	\tau_{\pm}(t,\lambda,\nu)\sim\begin{cases}\lambda\frac{\Gamma(\nu+\frac{1}{2})}{(2t)^{\nu+\frac{1}{2}}}\,\e^{-t},&(+)\bigskip\\ 1-\frac{\lambda^2}{2}\frac{\nu\Gamma^2(\nu+\frac{1}{2})}{(2t)^{2\nu+1}}\,\e^{-2t},&(-)\end{cases},\ \ \ \ \ t\rightarrow+\infty,\ \ \ \ \lambda\pi\in[0,1],\ \ \Re\nu>-\frac{1}{2}
\end{equation*}
as well as,
\begin{equation*}
	\tau_{\pm}(t,\lambda,\nu)\sim A(\lambda,\nu)t^{\frac{\sigma}{4}(\sigma-2)},\ \ \ \ t\downarrow 0,\ \ \ \ \lambda\pi\in(0,1],\ \ \ \Re\nu>-\frac{1}{2}.
\end{equation*}
Here, $A(\lambda,\nu)$ is $t$-independent and obviously generalizes \eqref{e:7}. In accordance with \eqref{e:9} we now connect \eqref{g:1} to the tau-function theory of \cite{JMU}. First off the integrand in \eqref{g:1} does not appear to be a Hamiltonian for \eqref{nusinh} (at least we were not able to prove it), still it can be written as linear combination of two Painlev\'e-III Hamiltonians\footnote{Without using the Hamiltonian system \eqref{e:8} for the radial sinh-Gordon equation, but instead polynomial Hamiltonians for Painlev\'e-III, Okamoto \cite{O},(7),(8) has used a similar approach in the identification of $\tau_{\pm}(t,\lambda)$ as a tau-function product.}. Concretely, let $\tau(t,\lambda,\nu)$ abbreviate the exponential in \eqref{g:1} without the additional hyperbolic multipliers, then
\begin{equation}\label{g:4}
	\tau(t,\lambda,\nu)=\exp\left[\int_{\frac{t}{2}}^{\infty}\left\{\frac{x}{4u^2(x)}\left[\big(1-u^2(x)\big)^2-\left(\frac{\d u}{\d x}(x)\right)^2\right]+\frac{\nu}{2u(x)}\big(1-u(x)\big)^2\right\}\,\d x\right],
\end{equation}
where $u(x)\equiv u(x,\lambda,\nu)=\e^{-\psi(t,\lambda,\nu)},t=2x$ solves Painlev\'e-III with parameters $(\alpha,\beta,\gamma,\delta)=(2\nu,-2\nu,1,-1)$,
\begin{equation}\label{p:iii}
	\frac{\d^2 u}{\d x^2}=\frac{1}{u}\left(\frac{\d u}{\d x}\right)^2-\frac{1}{x}\frac{\d u}{\d x}+\frac{1}{x}(2\nu u^2-2\nu)+u^3-\frac{1}{u}.
\end{equation}
Now put
\begin{equation}\label{H:1}
	H_1(u,v_1,x,\nu)=\frac{1}{x}\Big(u^2v_1^2-\big(xu^2-(2\nu+1)u-x\big)v_1-(2\nu+1)xu\Big)+\frac{(2\nu+1)^2}{4x}+2\nu+1,
\end{equation}
and
\begin{equation}\label{H:2}
	H_2(u,v_2,x,\nu)=\frac{1}{x}\Big(u^2v_2^2+\big(xu^2-(2\nu-1)u-x\big)v_2-(2\nu-1)xu\Big)+\frac{(2\nu-1)^2}{4x}+2\nu-1,
\end{equation}
which are both Hamiltonians for \eqref{p:iii}, cf. \cite{NIST,O}. Moreover, $\frac{\d u}{\d x}=\frac{\partial H_j}{\partial v_j},\frac{\d v_j}{\d x}=-\frac{\partial H}{\partial u},j=1,2$ yield the equations
\begin{equation}\label{vj}
	v_1=\frac{1}{2u^2}\left[x\frac{\d u}{\d x}+xu^2-(2\nu+1)u-x\right],\ \ \ \ \ v_2=\frac{1}{2u^2}\left[x\frac{\d u}{\d x}-xu^2+(2\nu-1)u+x\right],
\end{equation}
and thus, upon elimination of $v_j$ in \eqref{H:1} and \eqref{H:2}, 
\begin{equation*}
	H_1=\frac{x}{4u^2}\left[\left(\frac{\d u}{\d x}\right)^2-(1-u^2)^2\right]-\frac{2\nu+1}{2u}(1-u)^2,\ \ H_2=\frac{x}{4u^2}\left[\left(\frac{\d u}{\d x}\right)^2-(1-u^2)^2\right]-\frac{2\nu-1}{2u}(1-u)^2.
\end{equation*}
Taking a linear combination of $H_1$ and $H_2$ (compare the integrand in \eqref{g:4}) we arrive at the following result.
\begin{prop}\label{product} Assume $(u,v_1,v_2)$ solve the Hamiltonian systems
\begin{equation*}
	\frac{\d u}{\d x}=\frac{\partial H_1}{\partial v_1},\ \ \ \frac{\d v_1}{\d x}=-\frac{\partial H_1}{\partial u};\hspace{2cm}\frac{\d u}{\d x}=\frac{\partial H_2}{\partial v_2},\ \ \ \frac{\d v_2}{\d x}=-\frac{\partial H_2}{\partial u};
\end{equation*}
defined in \eqref{H:1}, \eqref{H:2} for $x\in(0,+\infty),\lambda\pi\in[0,1],\Re\nu>-\frac{1}{2}$ subject to the $u$-boundary condition
\begin{equation}\label{ubound}
	u(x,\lambda,\nu)\sim\exp\left[-2\lambda\int_1^{\infty}\frac{\e^{-2xy}}{\sqrt{y^2-1}}\left(\frac{y-1}{y+1}\right)^{\nu}\,\d y\right],\ \ x\rightarrow+\infty,
\end{equation}
and the corresponding $v_j$-boundary behavior computed from \eqref{vj} and \eqref{ubound}. Then
\begin{equation}\label{g:10}
	\tau(t,\lambda,\nu)=\exp\left[-\int_{\frac{t}{2}}^{\infty}\left\{\frac{1}{2}(1-\nu)H_1(u,v_1,x,\nu)+\frac{1}{2}(1+\nu)H_2(u,v_2,x,\nu)\right\}\,\d x\right].
\end{equation}
\end{prop}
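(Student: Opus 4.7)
The plan is to verify \eqref{g:10} by directly computing the linear combination $\tfrac{1}{2}(1-\nu)H_1 + \tfrac{1}{2}(1+\nu)H_2$ from the $v_j$-eliminated expressions for $H_1, H_2$ already derived in the excerpt, matching the result with the integrand of \eqref{g:4}, and confirming the individual integrals converge at infinity.

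For the algebraic step, the weights $\tfrac{1}{2}(1-\nu)$ and $\tfrac{1}{2}(1+\nu)$ sum to $1$, so the common factor $\frac{x}{4u^2}\bigl[(\d u/\d x)^2-(1-u^2)^2\bigr]$ passes through the linear combination unchanged. The $(1-u)^2/(2u)$-coefficients combine to
\begin{equation*}
-\tfrac{1}{2}\bigl[(1-\nu)(2\nu+1)+(1+\nu)(2\nu-1)\bigr] = -\nu,
\end{equation*}
giving
\begin{equation*}
\tfrac{1}{2}(1-\nu)H_1 + \tfrac{1}{2}(1+\nu)H_2 = \frac{x}{4u^2}\left[\left(\frac{\d u}{\d x}\right)^2 - (1-u^2)^2\right] - \frac{\nu(1-u)^2}{2u}.
\end{equation*}
This is exactly the negative of the integrand in \eqref{g:4}, so the integrals in \eqref{g:4} and \eqref{g:10} agree, establishing the identity at the level of integrands.

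For convergence, the boundary condition \eqref{ubound} forces $u\to 1$ and $\d u/\d x\to 0$ at a Bessel-type exponential rate as $x\to +\infty$. Hence $(1-u)^2$, $(1-u^2)^2$, and $(\d u/\d x)^2$ all decay exponentially, and from \eqref{vj} the conjugate momenta $v_j$ inherit the same decay. Consequently $H_j\to 0$ fast enough that $\int_{t/2}^{\infty} H_j \,\d x$ converges absolutely for $j=1,2$, so \eqref{g:10} follows as a genuine equality of numbers, not merely of formal expressions.

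The only genuinely nontrivial point is the choice of weights $\tfrac{1}{2}(1\mp\nu)$: any other convex average of $H_1$ and $H_2$ would leave a residual $\pm(1-u)^2/(2u)$ contribution, which is precisely what the $\nu$-weighted combination annihilates. This combinatorial coincidence is the content of Proposition \ref{product} and, in line with the accompanying footnote, is the closest one can come to realizing the integrand of \eqref{g:1} within the $(u,v_j)$ Hamiltonian framework without identifying it as a single Painlev\'e-III Hamiltonian.
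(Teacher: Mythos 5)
Your proposal is correct and matches the paper's own derivation: the paper likewise eliminates $v_j$ via \eqref{vj}, writes $H_1,H_2$ in terms of $u$ and $\d u/\d x$, and observes that the combination $\tfrac{1}{2}(1-\nu)H_1+\tfrac{1}{2}(1+\nu)H_2$ is exactly minus the integrand of \eqref{g:4}, with the boundary condition \eqref{ubound} guaranteeing convergence of the integrals. Nothing essential differs from the paper's argument.
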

Formula \eqref{g:10} shows that $\tau(t,\lambda,\nu)$ is a product of two tau-functions, one originating from the Hamiltonian \eqref{H:1} and another one coming from \eqref{H:2}. We now state the analogue of \eqref{e:10} for the underlying Hamiltonian integrals.
\begin{prop} Under the assumptions of Proposition \ref{product},
\begin{equation}\label{g:11}
	\int_t^{\infty}H_j(u,v_j,x,\nu)\,\d x=-tH_j(u,v_j,t,\nu)-L_j(u,t,\nu)+S_j(t,\lambda,\nu),\ \ \ \ j=1,2,
\end{equation}
where $S_j$ are the classical actions
\begin{equation*}
	S_j(t,\lambda,\nu)=\int_t^{\infty}\left(v_j\frac{\d u}{\d x}-H_j(u,v_j,x,\nu)\right)\,\d x,
\end{equation*}
and we have
\begin{equation*}
	L_1(u,t,\nu)=\frac{1}{2}(2\nu+1)\ln u+\frac{1}{2}(2\nu+1)\int_t^{\infty}\frac{1}{u}(1-u)^2\,\d x,
\end{equation*}
as well as
\begin{equation*}
	L_2(u,t,\nu)=-\frac{1}{2}(2\nu-1)\ln u+\frac{1}{2}(2\nu-1)\int_t^{\infty}\frac{1}{u}(1-u)^2\,\d x.
\end{equation*}
\end{prop}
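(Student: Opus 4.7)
The plan is to mimic the proof of \eqref{e:10}: I would differentiate both sides of \eqref{g:11} with respect to $t$, check that the resulting expressions agree, and then pin the integration constant down by inspecting the behavior as $t\to+\infty$. Because the Hamiltonians $H_j$ in \eqref{H:1}, \eqref{H:2} now carry explicit $x$-dependence through both $x$ and $1/x$ pieces, the Legendre transform of $H_j$ produces an ``anomaly'' not present in \eqref{e:10}, and the role of $L_j$ is precisely to absorb the antiderivative of this anomaly.

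Setting up the differentiation is standard: Hamilton's equations $\frac{\d u}{\d x}=\partial_{v_j}H_j$, $\frac{\d v_j}{\d x}=-\partial_u H_j$ give $\frac{\d}{\d t}H_j(u(t),v_j(t),t,\nu)=\partial_x H_j|_{u,v_j}$ along a solution, so that
$$\frac{\d}{\d t}\bigl[tH_j(u,v_j,t,\nu)\bigr]=H_j+t\,\partial_x H_j\bigr|_{u,v_j},\qquad \frac{\d S_j}{\d t}=-v_j\frac{\d u}{\d t}+H_j.$$
Since the left-hand side of \eqref{g:11} differentiates to $-H_j$, matching $t$-derivatives on both sides reduces to the single algebraic identity
$$\frac{\d L_j}{\d t}=H_j-t\,\partial_x H_j\bigr|_{u,v_j}-v_j\frac{\d u}{\d t}.$$

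The verification of this identity is the heart of the proof. I would expand $H_j$ as a polynomial in $v_j$, read off $t\,\partial_x H_j|_{u,v_j}$ and $v_j\partial_{v_j}H_j=v_j\frac{\d u}{\d t}$ directly from \eqref{H:1} or \eqref{H:2}, and then eliminate $v_j$ in favor of $u,\frac{\d u}{\d t}$ via \eqref{vj}. For $j=1$ the combination collapses to $\tfrac{1}{2}(2\nu+1)\bigl[\frac{1}{u}\frac{\d u}{\d t}-\frac{(1-u)^2}{u}\bigr]$, which is exactly $\frac{\d L_1}{\d t}$: the $\ln u$ piece of $L_1$ supplies $\tfrac{1}{2}(2\nu+1)\frac{1}{u}\frac{\d u}{\d t}$, and differentiation of the lower limit of the integral in $L_1$ supplies $-\tfrac{1}{2}(2\nu+1)\frac{(1-u)^2}{u}$. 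The case $j=2$ is structurally identical: the sign flips conspire to produce the coefficient $-\tfrac{1}{2}(2\nu-1)$ throughout, matching $\frac{\d L_2}{\d t}$.

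The additive constant is finally fixed to zero via the $t\to+\infty$ asymptotics coming from \eqref{ubound}: both $1-u$ and $\frac{\d u}{\d t}$ decay exponentially, and \eqref{vj} then forces $v_1\to-\tfrac{1}{2}(2\nu+1)$ and $v_2\to\tfrac{1}{2}(2\nu-1)$; a direct check shows $H_j$ vanishes at these equilibrium values and in fact decays exponentially along the trajectory, so $tH_j(u,v_j,t,\nu)$, $L_j(u(t),t,\nu)$, $S_j(t,\lambda,\nu)$ and $\int_t^{\infty}H_j\,\d x$ all tend to $0$ simultaneously. The main obstacle is the polynomial bookkeeping in the previous paragraph: unlike the $\nu=0$ case, one must split the Legendre anomaly into a total $t$-derivative part (absorbed into the $\ln u$ piece of $L_j$) and a purely $u$-dependent density (absorbed into the integral piece of $L_j$), and verifying that this split matches \eqref{g:11} with the correct signs $\pm(2\nu\pm 1)$ requires care.
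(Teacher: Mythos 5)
Your proposal is correct and takes essentially the same route as the paper: $t$-differentiation of both sides of \eqref{g:11} using the Hamiltonian systems, with $L_j$ absorbing the explicit $x$-dependence of $H_j$, and the additive constant fixed to zero by the exponential decay of $1-u$, $\frac{\d u}{\d x}$, $v_1+\frac{1}{2}(2\nu+1)$ and $v_2-\frac{1}{2}(2\nu-1)$ as $t\rightarrow+\infty$ coming from \eqref{ubound} and \eqref{vj}. The explicit collapse of $H_j-t\,\partial_xH_j-v_j\frac{\d u}{\d t}$ to $\pm\frac{1}{2}(2\nu\pm1)\bigl[\frac{1}{u}\frac{\d u}{\d t}-\frac{1}{u}(1-u)^2\bigr]$ that you describe is precisely the ``straightforward computation'' the paper leaves implicit, so no gap remains.
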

\begin{proof} We verify \eqref{g:11} by $t$-differentiation of left and right hand side using the Hamiltonian systems \eqref{H:1} and \eqref{H:2}. This straightforward computation shows that both sides in \eqref{g:11} can only differ by a $t$-independent additive term, however since from \eqref{psiinf},
\begin{equation*}
	u(x,\lambda,\nu)\sim 1-2\lambda\frac{\Gamma(\nu+\frac{1}{2})}{(4x)^{\nu+\frac{1}{2}}}\e^{-2x},\ \ \ \ \ \ \frac{\d u}{\d x}(x,\lambda,\nu)\sim 4\lambda\frac{\Gamma(\nu+\frac{1}{2})}{(4x)^{\nu+\frac{1}{2}}}\e^{-2x},\ \ \ x\rightarrow+\infty,
\end{equation*}
and from \eqref{vj},
\begin{equation*}
	v_1(x,\lambda,\nu)\sim-\frac{1}{2}(2\nu+1),\ \ \ \ \ \ v_2(x,\lambda,\nu)\sim\frac{1}{2}(2\nu-1),\ \ \ x\rightarrow+\infty,
\end{equation*}
both sides in \eqref{g:11} vanish exponentially fast at $t=+\infty$, thus the given identities follow.
\end{proof}
Observe that for any $t\in(0,\infty),\lambda\pi\in[0,1]$ and $\Re\nu>-\frac{1}{2}$, compare Corollary \ref{nice} above,
\begin{equation*}
	S_j(t,\lambda,\nu)=-\int_0^{\lambda}v_j\frac{\partial u}{\partial\lambda'}\,\d\lambda',
\end{equation*}
however the integrand in \eqref{g:10} also leads to the combination
\begin{equation*}
	\frac{1}{2}(1-\nu)L_1(u,t,\nu)+\frac{1}{2}(1+\nu)L_2(u,t,\nu)=\frac{1}{2}(1-2\nu^2)\ln u+\frac{\nu}{2}\int_t^{\infty}\frac{1}{u}(1-u)^2\,\d x,
\end{equation*}
which is not fully explicit due to the remaining $t$-integral (and this one drops out precisely for $\nu=0$). This situation is completely analogous to the author's recent work \cite{BIP}, see Theorem $1.5$, equation $(1.28)$ where a similar integral term survived. It was shown that this term (opposed to the Hamiltonian integral) admits a straightforward Riemann-Hilbert representation, see Appendix $A$, $(A.24), (A.25)$ in \cite{BIP}. This means we can compute the remaining integral explicitly from an asymptotic/nonlinear steepest descent resolution of the Riemann-Hilbert problem for Painlev\'e-III (see Section $6.1$ in \cite{BIP}, where this is worked out for a Painlev\'e-III transcendent occurring in random matrix theory). For the sake of brevity we choose not to carry out the relevant analysis here, the solution of the tau-function connection problem based on \eqref{g:10} and \eqref{g:11} will be part of a separate future publication.

\section{Closing remarks}\label{sec:13}
Previously \cite{T,TW} $A(\lambda)$ was derived from Fredholm determinant representations of the spin-spin correlation function, for instance for $T<T_c$ and $N\geq 0$ one has the form factor expansion
\begin{equation}\label{Fred:1}
	\langle\sigma_{00}\sigma_{MN}\rangle=\mathcal{M}_-^2\det\big(1+g_{MN}\upharpoonright_{L^2((-\pi,\pi), \d\theta)}\big),
\end{equation}
with the spontaneous magnetization
\begin{equation*}
	\mathcal{M}_-=\big(1-k^2\big)^{\frac{1}{8}},\ \ \ \ k=\frac{1}{\sinh(2\beta J_1)\sinh(2\beta J_2)},
\end{equation*}
and the operator kernel $g_{MN}(\theta_1,\theta_2)$ given in \cite{P}, page $142$. Carrying out the massive scaling limit in \eqref{Fred:1} one finds in turn (\cite{P}, Theorem $3.6.1$)
\begin{equation}\label{Fred:2}
	\tau_-\left(t,\frac{1}{\pi}\right)=\det\big(1-K_t^2\upharpoonright_{L^2((0,\infty),\frac{\d\lambda}{2\pi\lambda})}\big),
\end{equation}
where $K_t$ is the operator on $L^2((0,\infty),\frac{\d\lambda}{2\pi\lambda})$ with kernel
\begin{equation*}
	K_t(\lambda,\mu)=\e^{-\frac{t}{4}(\lambda+\lambda^{-1})}\frac{\lambda-\mu}{\lambda+\mu}\e^{-\frac{t}{4}(\mu+\mu^{-1})}.
\end{equation*}
The additional (operator theoretical) integrable structure \eqref{Fred:2} is extremely useful in the solution of the tau-function connection problem and a common theme in nonlinear mathematical physics as several distribution, gap or correlation functions in random matrix theory, the theory of lattice models and field theories have Fredholm (or Hankel, or Toeplitz) determinant formul\ae. In particular, nearly every isomonodromic tau-function (such as our $\tau_{\pm}(t,\lambda)$) that appears in the aforementioned fields has been analyzed asymptotically based on their operator theoretical representations. Still, these tau-functions are non-generic examples in the theory of Jimbo-Miwa-Ueno, cf. \cite{JMU}. For instance, equation \eqref{e:1} possesses a two-parameter family of solutions characterized by
\begin{equation*}
	\psi(t)=\alpha\ln t+\beta+\mathcal{O}\left(t^{2(1-|\Re\alpha|)}\right),\ \ \ t\downarrow 0,
\end{equation*}
where $\alpha\in\mathbb{C}:|\Re\alpha|<1$ and $\beta\in\mathbb{C}$ are the parameters (the Cauchy data) of the solution $\psi(t)\equiv\psi(t,\alpha,\beta)$, cf. \cite{FIKN}, Chapter $15$ (with $u(x)=2\psi(t)$ where $t=2x$ in this reference). For generic choices of $(\alpha,\beta)$ the large $t$-behavior of $\psi(t,\alpha,\beta)$ is oscillatory and thus very different from \eqref{plus}. Moreover there are no known Fredholm (or Hankel, or Toeplitz) formul\ae\,for the corresponding tau-function,
\begin{equation*}
	\frac{\d }{\d t}\ln\tau(t,\alpha,\beta)=-\frac{1}{2}H(q,p,t),\ \ \ \ q=q(t,\alpha,\beta)\equiv\psi(t,\alpha,\beta),
\end{equation*}
with Hamiltonian as in \eqref{e:8}. One way to solve the tau-function connection problem in this generic setup was recently discovered by Gamayun, Iorgov, Lisovyy, Tykhyy, Its and Prokhorov \cite{GIL,ILT,IP}. The authors propose an extension of the classical Jimbo-Miwa-Ueno $1$-form $\omega\propto H(q,p,t)\d t$ to a differential form which is closed not only on the space of isomonodromic times (in our case $t$) but on the full space of extended monodromy data of the corresponding Lax systems (in our case $t,\alpha$ and $\beta$). This extended $1$-form in turn possesses a remarkable Hamiltonian interpretation (see e.g. \cite{IP}, Appendix $A.1$ for the sine-Gordon equation, a simple transformation of \eqref{e:1}) and en route produces a formula in the style of \eqref{e:10}.\bigskip

The appearance of an action integral formula is thus no surprise\footnote{It is in fact already present in \cite{LZ} where $A\left(\frac{1}{\pi}\right)$ is computed from a field theoretic viewpoint.} and the point we are making in this article is that the general approach of \cite{IP} is also useful in the solution of the special 2D-Ising connection problem as it gives us an extremely simple way to compute $A(\lambda)$ (this constant cannot be obtained from the result of \cite{IP}, compare \cite{IP},(9) where we would require $\eta=0$ in order to match with \eqref{plus}). We refer the interested reader to \cite{BIP} where the same methodology has been used in the computation of tau-function expansions in random matrix theory.

\end{document}